\newcommand{\F}{\mathbb{F}}
\newcommand{\Cout}{\mathcal{C}_{\textnormal{out}}}
\newcommand{\Cin}{\mathcal{C}_{\textnormal{in}}}
\newtheorem{example}[theorem]{Example}
\title{On Burst Error Correction and Storage Security of Noisy Data\thanks{Research partially supported by Armasuisse and Swiss
National Science Foundation Projects no. 132256 and no. 138080.}} 
\author{Felix Fontein\and
Kyle Marshall\and
Joachim Rosenthal\and
Davide Schipani\and
Anna-Lena Trautmann\thanks{The authors are with the Institute of Mathematics, University of Zurich, Switzerland ({\tt http://www.math.uzh.ch/aa}).}
%\\Institute of Mathematics\\
 % University of Zurich, Switzerland
}
\begin{document}

\maketitle

\begin{abstract}
Secure storage of noisy data for authentication purposes usually involves the use of error correcting codes. We propose a new model scenario involving burst errors and present for that several constructions.
\end{abstract}

\begin{keywords} 
Storage security, fuzzy data, biometric authentication, burst errors, two-dimensional codes, Reed-Solomon codes, concatenated codes.
\end{keywords}

\begin{AMS}
94B20, 94A62, 68P30.
\end{AMS}

\pagestyle{myheadings}
\thispagestyle{plain}

\section{Introduction}

Sensible data, as e.g.\ passwords, are usually stored via the use of a one-way hash function, so that
retrieving the data implies comparing the stored hash with the hash of the new input. In many contexts, though, this simple procedure cannot be used due to the noisy or fuzzy nature of the data.
An outstanding example is the storage of biometric data, e.g.\ in the form of fingerprint, iris, voice, DNA etc., for the purpose of authentication: the data derived from different acquisitions of the same biometric feature can slightly change from each other, and the biometric feature can slightly change itself for different reasons \cite{ul04}. Therefore a certain threshold of tolerance is needed to distinguish legitimate from non legitimate users, but this prevents the standard use of collision resistant hash functions \cite{sc10}. 

This problem has led to the proposal of systems for the secure storage of biometric passwords
(see \cite{tu07} for a selected survey of the literature), which essentially act as ``tolerant'' hash functions.
%\cite{bo05,do04,fr01c,ha05b,ju06,ju99,sc10,tu07,ul04} 
%and the references therein, we limit ourselves to point out some of the most significant milestones. 
The idea behind most of these methods is a combined use of error correcting codes and hash functions, whose model is the fuzzy commitment scheme \cite{ju99}. 

A scheme which is apparently just the dual of this is the syndrome
fuzzy hashing construction: in \cite{FUZZY} we showed that
it offers several advantages with respect to the fuzzy
commitment scheme, in particular as far as information leakage is concerned.

The fuzzy commitment scheme has later been generalized to other types of metrics, such as the set difference metric \cite{ju06} and the edit distance metric \cite{do08}.

In particular, the fuzzy vault \cite{ju06} uses polynomial interpolation in order to allow authentication based on the matching of a sufficient number of features, while the fuzzy extractor \cite{do08} is a further generalization which combines the previous constructions with particular objects called random extractors. These make the previous schemes stronger with respect to information leakage, although they cannot prevent it \cite{bu08,do05}.

The choice of one scheme rather than another depends not only on the application or model scenario, but also on important issues, like for example privacy concerns, as already mentioned, or suitability for implementation, as we already discussed in \cite{FUZZY, sc10}.

In this paper we will slightly modify the model scenario: we will tolerate up to, say, $\ell$ error bursts of maximum length $b$, and possibly assume also the presence of other random errors.
%we will still tolerate errors up to a threshold, say, of $t$ different bits, but we assume that these errors come in bursts of, say, maximum length $b$. 
This assumption is realistic in many contexts, where errors are likely to appear in bursts, and leads us to focus on and take advantage of burst error correcting codes. We will investigate different choices of codes, depending on the type and the dimension of the bursts, and elaborate on their key features, like error correcting capability and decoding complexity. %and compare their parameters, error correction capability and rate, to the ones of the codes proposed in previous works. 

The structure of the paper is the following: in Section $2$ we review the syndrome fuzzy hashing construction, which can be taken as our model scheme. %Moreover, we define burst errors and explain why the fuzzy commitment scheme is preferred over the fuzzy vault for burst error correction. 
The rest of the paper is devoted to solutions using burst error-correcting codes: we deal with base field representations of Reed-Solomon codes in Section $3$ and with concatenated codes in Section $4$. Lastly Section $5$ presents some conclusions.
%Given the generally large length of the data, the decoding complexity is also an important issue that we need to take into account. We will consider array codes, that present a very simple decoding algorithm, and fire codes, which belong to the family of cyclic codes and can be decoded with standard or new algorithms \cite{Schipani2011}. We will give an estimate of their decoding complexity and compare it to the complexity of the schemes proposed before. 

\section{Syndrome fuzzy hashing and burst error correction}
We review here the syndrome fuzzy hashing construction, as presented in \cite{FUZZY}.

Suppose we need a tolerance of $e$ errors, then an $[n,k]$-linear block code $C \subset\mathbb{F}_q^n$, able to correct $e$ errors, is selected,
and it is described through its $r \times n$ parity-check matrix $H$, with $r = n-k$.
Given a data vector $x$ to be stored, the pair $(H_a(x),Hx)$ is used to represent $x$, were $H_a$ is a given hash function.
When another vector $y$ is acquired and is compared with $x$, the value $Hx-Hy=H(x-y)=Hv$ is computed, that coincides with the syndrome associated to the difference vector $v = x-y$.
Then, syndrome decoding is applied on $Hv$, according to the chosen code $C$.
If $d(x,y) \le e$, then $v$, which has Hamming weight equal to $d(x,y)$, corresponds
to a correctable error vector. So, syndrome decoding succeeds and correctly results in $v$.
Then, starting from $v$ and $y$, $x$ can be computed, as well as $H_a(x)$.
The latter coincides with the stored value, so authentication succeeds.
Otherwise, syndrome decoding fails or reports $w \ne v$. In such case, $x' = w+y \ne x$
and $H_a(x') \ne H_a(x)$ is obtained, and authentication fails.

As stated before, this construction is the dual version of the fuzzy commitment scheme, but offers better security in terms of reduced information leakage.

One can transform an $[n,k]$-linear block code into a two-dimensional (array) code by writing the codewords into an $n_{1}\times n_{2}$-array such that $n_{1}n_{2}=n$. A simple way of doing so is writing the vector entries into the array row by row. For syndrome decoding such a two-dimensional code is transformed back into the vector representation of the original linear block code and can be used as explained before.

\begin{definition}
\begin{enumerate}
\item
A \emph{one-dimensional burst (error) of length $m$} is a vector of length $m$ over $\F_{q}$, such that the first and last entries are non-zero.
\item
A \emph{two-dimensional rectangular burst (error) of size $m\times m'$} is an $m\times m'$- matrix over $\F_{q}$, such that the first and last columns and the first and last rows each contain a non-zero element.
\end{enumerate}
\end{definition}

Note that a one-dimensional burst may occur horizontally or vertically in a two-dimensional code. 

Burst errors of this type appear in many real-life applications, so that in many contexts the syndrome fuzzy hashing construction can exploit the power of burst error correcting codes. % since the data structure is preserved in the code used while the fuzzy vault cannot be used for burst error correction.

\section{Base field representations of Reed-Solomon codes}

Reed-Solomon codes are optimal codes defined over an extension field $\F_{q^m}$. When the entries are expanded over the base field, one gets a code over $\F_q$ that may be used for burst error correction in different ways.

\begin{definition}
Consider the extension field $\F_{q^m}$ and $n=q^m-1$ distinct elements $x_1,\dots,x_n \in \F_{q^m}$, and let $0 < k < n$. Then
\[RS=\{(f(x_1),\dots,f(x_n)) \mid f\in \F_{q^m}[x], \deg(f)< k\} \subseteq \F_{q^m}^n\]
is called a \emph{Reed-Solomon} (RS) code. It has length $n$, dimension $k$ and minimum distance $d=n-k+1$, which means that it is optimal (MDS). 
%We will denote it as an $[n,k,n-k+1]$-RS code.
\end{definition}

\subsection*{Expanding as  row vectors (Construction~I)}

When expanding the elements of $\F_{q^m}$ in a Reed-Solomon code as elements of $\F_q^m$ via an $\F_q$-isomorphism $\F_{q^k}\cong \F_{q}^k$, one gets a $q$-ary linear code of length $n'=nm=m(q^m-1)$ and dimension $k'=mk$. The minimum distance is $d'\geq d$ but the burst-error correction capability is higher than the random error correction capability.
In fact, since at most $\lfloor\frac{n-k}{2}\rfloor$ symbols from $\F_{q^m}$ can be corrupted, it is readily seen that such a code is able to correct any
\begin{itemize}
 \item  single burst of length $\leq m(\lfloor\frac{n-k}{2}\rfloor-1)+1$, or
 \item  any $\ell$ many bursts of length $\leq m(\lfloor\frac{n-k}{2\ell}\rfloor-1)+1$.
\end{itemize}

In the binary case, this expansion of codes over $\F_{2^m}$ into binary codes can be found \ in \cite[Ch. 10 \S 5]{ma77}. There one can also find that adding a parity check bit in the end of each $m$-vector results in a code of length $n'=(m+1)(2^m-1)$, dimension $k'=mk$ and minimum distance $d'\geq 2d$. The analogue holds for $q$-ary codes. The higher minimum distance means that one can correct more random errors with this variation of the construction.

\subsection*{Expanding as matrices (Constructions~II and III)}
RS codes in $q$-ary representation can also be exploited if one wants to compare two data matrices (or higher dimensional data), if the difference pattern is likely to consist of two-dimensional bursts. 

To obtain a two-dimensional $q$-ary representation of the Reed-Solomon codes, it may be convenient to write each symbol from $\F_{q^m}$ into a rectangle or a square array. This can be done in different ways, e.g.\ row-wise, column-wise, in spiral shapes etc., which does not make a difference in the error-correction capability. Moreover, one can write the RS code itself (row-wise) into a $n_1\times n_2$-matrix such that $n_1n_2=n$.
If $m$ is a square and we expand the symbols from $\F_{q^m}$ into squares of length $\sqrt{m}$, then one gets a $q$-ary array code of length $n_1\sqrt{m}\times n_2\sqrt{m}$. 
\begin{theorem}\label{thm1}
A $q$-ary RS code with the parameters from above is able to correct 
\begin{itemize}
 \item any single square burst of area $ \leq\Bigl(\sqrt{m}\Bigl\lfloor \sqrt{\frac{n-k}{2}}-1\Bigr\rfloor + 1\Bigr)^2$, or
 \item any $\ell$ many square bursts of area $\leq\Bigl( \sqrt{m} \Bigl\lfloor\sqrt{\frac{n-k}{2\ell}}-1\Bigr\rfloor+1\Bigr)^2$, or
 \item any single one-dimensional burst of length $\leq \sqrt{m}(\lfloor\frac{n-k}{2}\rfloor-1)+1$, or
 \item any $\ell$ many one-dimensional bursts of length $\leq \sqrt{m}(\lfloor\frac{n-k}{2\ell}\rfloor-1)+1$.
\end{itemize}
\end{theorem}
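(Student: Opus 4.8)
The plan is to reduce all four claims to a single counting principle. Since the RS code has minimum distance $d=n-k+1$, it corrects any pattern of at most $t:=\lfloor\frac{n-k}{2}\rfloor$ erroneous symbols over $\F_{q^m}$, and in the square expansion each such symbol occupies exactly one $\sqrt m\times\sqrt m$ block of the $n_1\sqrt m\times n_2\sqrt m$ array. A block counts as a single symbol error as soon as one of its $m$ cells is corrupted, no matter how many are. Hence the code corrects an error configuration precisely when that configuration meets at most $t$ blocks, and the theorem becomes a purely combinatorial statement about how many blocks a burst of a prescribed shape can meet.

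First I would isolate the underlying one-dimensional lemma. Partition a line of cells into consecutive segments of width $w=\sqrt m$ and take a contiguous run of $L$ cells starting at position $aw+r$ with $0\le r<w$; the number of segments it meets is $\lfloor (r+L-1)/w\rfloor+1$, which is maximized at $r=w-1$ and there equals $\lceil (L-1)/w\rceil+1$. Consequently a run of length $L\le w(j-1)+1$ meets at most $j$ segments, and this is sharp. I expect this floor/ceiling bookkeeping, and in particular verifying that the bound is attained by a run straddling a segment boundary, to be the only genuinely delicate point; everything afterwards is substitution into the RS error budget.

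Then I would dispatch the four cases. A horizontal (resp.\ vertical) one-dimensional burst lies in a single block-row (resp.\ block-column), so by the lemma a length of at most $\sqrt m(t-1)+1$ meets at most $t$ blocks and is correctable; for $\ell$ such bursts I set $t_\ell:=\lfloor\frac{n-k}{2\ell}\rfloor$, observe that each burst of length $\le\sqrt m(t_\ell-1)+1$ meets at most $t_\ell$ blocks, and bound the total by $\ell t_\ell\le\frac{n-k}{2}$, which being an integer is at most $t$. For a square burst of side $s$ the lemma applies independently along block-rows and block-columns, so the burst meets at most $(\lceil (s-1)/\sqrt m\rceil+1)^2$ blocks; taking $u:=\lfloor\sqrt{(n-k)/2}\rfloor$ and $s\le\sqrt m(u-1)+1$ forces this to be at most $u^2\le\frac{n-k}{2}$, again an integer and hence $\le t$, which rewrites exactly as the stated area bound $s^2\le\bigl(\sqrt m(u-1)+1\bigr)^2=\bigl(\sqrt m\lfloor\sqrt{\frac{n-k}{2}}-1\rfloor+1\bigr)^2$. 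The $\ell$-square case is identical with $u_\ell:=\lfloor\sqrt{(n-k)/(2\ell)}\rfloor$, the $\ell$ bursts meeting at most $\ell u_\ell^2\le\frac{n-k}{2}$ blocks, hence at most $t$, completing the proof.
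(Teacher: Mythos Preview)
Your proof is correct and follows essentially the same route as the paper: the core step is the worst-case count $(\lceil (s-1)/\sqrt m\rceil+1)^2$ of corrupted extension-field symbols for a square burst of side $s$, and then the floor/ceiling inversion against the RS error budget $(n-k)/2$. You have simply spelled out the one-dimensional ``how many segments does a run of length $L$ meet'' lemma explicitly (and handled the integer-floor bookkeeping more carefully), whereas the paper states the square-burst count directly and refers the one-dimensional items back to Construction~I.
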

\begin{proof}
The one-dimensional burst error correction capability is the same as in the row vector expansion case, thus it remains to show the square burst case. 

% One can easily check that in the worst case a burst of side length $x\sqrt{m}$ corrupts as many extension field elements as a burst of side length $(x-1)\sqrt{m}+2,\dots,x\sqrt{m}+1$ for arbitrary $x\in \mathbb{N}$. Therefore, the maximal side length of a correctable burst is of the type $x\sqrt{m}+1$. 

Let $x$ denote the side length of the burst. The burst corrupts the most extension field symbols of the code when e.g.\ the entry of its upper left corner lies on the entry of the lower right corner of some expanded symbol. Then the burst corrupts at most
\[\left(\left\lceil\frac{x-1}{\sqrt{m}}\right\rceil +1\right)^2\]
extension field symbols. Since the number of corrupted extension field elements has to be less than or equal to the error-correction capability of the RS code, it has to hold that
%\[(x+1)^2 \leq \left\lfloor\frac{n-k}{2}\right\rfloor \iff x \leq \sqrt{\left\lfloor\frac{n-k}{2}\right\rfloor} -1\]
\[\left(\left\lceil\frac{x-1}{\sqrt{m}}\right\rceil +1 \right)^2 \leq \frac{n-k}{2} \iff  x\leq \sqrt{m} \biggl\lfloor\sqrt{\frac{n-k}{2}} -1 \biggr\rfloor + 1\]
which implies the formula. In the case that there are $\ell$ square bursts it has to hold that
%\[l(x+1)^2 \leq \left\lfloor\frac{n-k}{2}\right\rfloor \iff x \leq \sqrt{\left\lfloor\frac{n-k}{2l}\right\rfloor} -1 .\]
\[\ell\left(\left\lceil\frac{x-1}{\sqrt{m}}\right\rceil +1 \right)^2 \leq  \frac{n-k}{2}  \iff  x\leq \sqrt{m} \biggl\lfloor\sqrt{\frac{n-k}{2\ell}}-1 \biggr\rfloor +1 .\]%
\end{proof}

Note, that in analogy to Construction~I, one can again add a parity check element to all vector expansions of the extension field elements to achieve a larger minimum distance.

Another way of expanding the elements of $\F_{q^m}=\F_{q}[\alpha]$ is using companion matrices. For this let $p(x)\in \F_q[x]$ be the minimal polynomial of $\alpha$ and $P\in \F_q^{m\times m}$ its companion matrix. Then $\F_{q^m} \cong \F_q[P]$ and we can represent the extension field elements by $q$-ary matrices. %\cite{lidl}
If the RS code is written in an $n_1\times n_2$-matrix like above, then this results in a $q$-ary array code of length $n_1 m\times n_2 m$. 
\begin{theorem}
A $q$-ary RS code expanded with companion matrices with the parameters from above is able to correct 
\begin{itemize}
 \item any single square burst of area $ \leq\Bigl(m\Bigl\lfloor \sqrt{\frac{n-k}{2}}-1\Bigr\rfloor + 1\Bigr)^2$, or
 \item any $\ell$ many square bursts of area $\leq\Bigl( m \Bigl\lfloor\sqrt{\frac{n-k}{2\ell}}-1\Bigr\rfloor+1\Bigr)^2$, or
 \item any single one-dimensional burst of length $\leq m(\lfloor\frac{n-k}{2}\rfloor-1)+1$, or
 \item any $\ell$ many bursts of length $\leq m(\lfloor\frac{n-k}{2\ell}\rfloor-1)+1$.
\end{itemize}
\end{theorem}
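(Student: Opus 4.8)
The plan is to mirror the proof of Theorem~\ref{thm1} almost verbatim, replacing the block side length $\sqrt{m}$ of the square expansion by the block side length $m$ of the companion-matrix expansion. The justification for this is structural: under the isomorphism $\F_{q^m}\cong\F_q[P]$ each extension field symbol occupies a contiguous, axis-aligned $m\times m$ block of the array code of size $n_1 m\times n_2 m$, exactly as a symbol occupied a $\sqrt{m}\times\sqrt{m}$ block in the square-expansion setting. Since the underlying RS code corrects up to $\lfloor\frac{n-k}{2}\rfloor$ corrupted $\F_{q^m}$-symbols, the whole problem again reduces to bounding, in the worst case, the number of expanded symbols that a given burst can touch, and that combinatorial count is insensitive to everything except the block size.

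First I would treat the square burst. Let $x$ be its side length. As in Theorem~\ref{thm1}, the worst placement is when the upper left corner of the burst coincides with the lower right corner of some $m\times m$ block, in which position the burst meets at most
\[\left(\left\lceil\frac{x-1}{m}\right\rceil+1\right)^2\]
extension field symbols, the sole change from the square-expansion count being the denominator $m$ instead of $\sqrt{m}$. Requiring this not to exceed the decoding radius gives
\[\left(\left\lceil\frac{x-1}{m}\right\rceil+1\right)^2\le\frac{n-k}{2}\iff x\le m\biggl\lfloor\sqrt{\frac{n-k}{2}}-1\biggr\rfloor+1,\]
where the floor enters because $\lceil(x-1)/m\rceil$ is an integer; squaring produces the stated area bound. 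The $\ell$-burst case is identical once the factor $\ell$ is inserted on the left and divided through, yielding $\sqrt{(n-k)/(2\ell)}$ and hence the second bullet.

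For the one-dimensional bursts I would note that a horizontal (respectively vertical) burst lies in a single row (column) of the large array, hence inside a single row (column) of $m\times m$ blocks; a burst of length $L$ then meets at most $\lceil\frac{L-1}{m}\rceil+1$ symbols, so setting this at most $\lfloor\frac{n-k}{2}\rfloor$ recovers $L\le m(\lfloor\frac{n-k}{2}\rfloor-1)+1$, and dividing the radius by $\ell$ gives the multi-burst version. I do not expect a substantial obstacle here, since the genuine content is the geometric worst-case counting already performed for Theorem~\ref{thm1}. The one point deserving care is to confirm that the companion-matrix blocks tile the array with the same axis-aligned regularity as the square blocks, so that the corner-alignment worst case and the ceiling estimates carry over without modification; once that is checked, each of the four bullets follows by the same floor/ceiling manipulation with $m$ playing the role of $\sqrt{m}$.
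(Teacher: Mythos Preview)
Your proposal is correct and matches the paper's approach exactly: the paper simply states that the proof is analogous to that of Theorem~\ref{thm1} and omits it, and your argument is precisely that analogue, replacing the block side $\sqrt{m}$ by $m$ throughout the worst-case ceiling/floor counts.
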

The proof is analogous to the one of Theorem \ref{thm1} and hence omitted. 

It can be noticed that this second expansion can offer a comparable burst error correction capability with respect to the other expansion (considering $q$-ary expansions of comparable length), but allowing to start with a RS code of smaller length and having lower decoding complexity. The rate though would be relatively smaller, which may reduce the security of the scheme.

%\begin{corollary}
%The two above mentioned $q$-ary matrix expansions of an RS code are equivalent from a coding theoretic point of view. 
%\end{corollary}
%\begin{proof}
%We want to compare the two different expansions with respect to their burst error-correction capability and rate. Denote by $m$ the parameter of the first construction and by $m'$ the one of the second construction. To get the same square burst error-correction capability it has to hold that
%\[\sqrt{m} (\sqrt{\lfloor\frac{n-k}{2}\rfloor}-1)+1 = m' (\sqrt{\lfloor\frac{n-k}{2}\rfloor}-1)+1  \iff \sqrt{m}=m' . \] 
%Then the two codes have the same array size and thus the same rate. The same holds for the one-dimensional bursts.
%\end{proof}

\section{Concatenated codes}

When the error pattern is a mix of burst and random errors, concatenated codes provide the desired flexibility.

\begin{definition}\label{concatdef}
	Let $\Cout\colon (\F_{q^k})^K\rightarrow (\F_{q^k})^N$ and $\Cin^{(i)}\colon \F_{q}^k\rightarrow \F_q^n$ for $i=1,\dots,N$ be encoding functions. The \emph{concatenated code} $\mathcal{C} = \Cin^{(1)}\|\Cin^{(2)}\|\ldots \|\Cin^{(N)}$ is the $q$-ary code of length $Nn$ and dimension $Kk$ given by 
	$$ \mathcal{C} = \bigl\{(\Cin^{(1)}(y_1), \Cin^{(2)}(y_2),\ldots ,\Cin^{(N)}(y_N))\bigm| (y_1,y_2,\ldots,y_N) = \Cout(x), x\in (\F_{q^k})^K\bigr\}  .$$
\end{definition}

Here we assume the isomorphism $\F_{q^k}\cong \F_{q}^k$, so that the maps $\Cin$ agree in the domain. The codes $\Cin^{(1)},\ldots \Cin^{(N)}$ are called the \textit{inner codes} and $\Cout$ is called the \textit{outer code}. %Forney in \cite{} showed that concatenated codes can be used to achieve arbitrarily low error rates with polynomial complexity decoding algorithms for all rates below the channel capacity. Some of the most commonly used codes in applications are variations of concatenated codes, such as turbo codes. Even some common codes can be written as concatenated codes, as was shown for Hermitian codes by Ren in \cite{}, and moreover that this structure has advantages for burst error correction.
The decoding process is a two-step process in which the inner codes are decoded first, followed by the outer code. Suppose that the inner codes are identical, that is $$\mathcal{C} = \underbrace{C \| C \| \ldots \| C}_{n\textnormal{ times}},$$ and we can represent an element $c\in\mathcal{C}$ as \begin{equation}\label{concatrep} c = (c_{1,1}, \ldots, c_{1,n}\|\ldots \|c_{N,1}, \ldots, c_{N,n}).\end{equation} Assume that $C$ and $\Cout$ have efficient decoding procedures. %, as is always the case in practice. 
Suppose that the decoding procedure for $C$ can correct $t$ errors and the decoding procedure for $\Cout$ can correct $s$ errors. If fewer than $t+1$ errors occur in the coordinates of the $i$th copy of $C$ then the inner decoding step can correct these errors and the block would no longer count as an error in the outer code. If the number of errors is greater than $t$ then the $n$ symbols corresponding to the $i$th copy of $C$ count as only one error to the outer decoder. One consequence of this multi-layer decoding process is that $\mathcal{C}$ can correct any burst of length smaller than $n(s-1)+2t+1$, beside the possibility of correcting other random errors. Alternatively, it is not difficult to see that $\mathcal{C}$ also has the capability to correct multiple burst errors, provided their lengths are not too long.

\subsection*{Construction~IV}
In the two-dimensional case, concatenated codes can be interleaved in ways that have advantages against different types of error patterns. Let $b \mid N$ and $a \mid n$. Arrange the elements of $c$ in \eqref{concatrep} in the $\frac{Na}{b}\times \frac{nb}{a}$ array given by 
\begin{equation*}\label{concatinterleave1}\left(
\begin{array}{ccc|c|ccc}
	c_{1,1}  & \ldots & c_{b,1} &  \ldots & c_{1,n/a} & \ldots & c_{b,n/a} \\
	c_{b+1,1}  & \ldots & c_{2b,1} &  \ldots & c_{b+1,n/a} & \ldots & c_{2b,n/a} \\
	& \vdots & & \vdots &  & \vdots & \\
	c_{N-b+1,1} &  \ldots & c_{N,1} & \ldots & c_{N-b+1,n/a} & \ldots & c_{N,n/a} \\ \hline & \vdots & &\ddots & &\vdots & \\ \hline
	c_{1,\frac{(a-1)n}{a}+1}  & \ldots & c_{b,\frac{(a-1)n}{a}+1} &  \ldots & c_{1,n}  & \ldots & c_{b,n} \\
	c_{b+1,\frac{(a-1)n}{a}+1}  & \ldots & c_{2b,\frac{(a-1)n}{a}+1} &  \ldots & c_{b+1,n}  & \ldots & c_{2b,n} \\
	& \vdots & & \vdots & \vdots & \\
	c_{N-b+1,\frac{(a-1)n}{a}+1}  & \ldots & c_{N,\frac{(a-1)n}{a}+1} & \ldots & c_{N-b+1,n} & \ldots & c_{N,n} \\
\end{array}
\right)  .\end{equation*}
Using this interleaving pattern, any submatrix of size $\frac{N}{b}\times b$ contains only one symbol from each inner code. It is straightforward to see that this interleaving scheme can correct at least $t$ rectangular bursts of size $\frac{N}{b}\times b$, since it can correct $t$ such bursts using the inner codes alone. Additionally, this interleaving pattern can correct at least $s$ random errors.

\subsection*{Construction~V}
Without interleaving, one could consider the following pattern. Let $a \mid N$ and $b \mid n$. Arrange the elements of $c$ in \eqref{concatrep} in the $\frac{Nn}{ab}\times a b$ array given by 
\begin{equation*}\label{concatnointerleave}\left(
\begin{array}{ccc|c|ccc}
	c_{1,1}  & \ldots & c_{1,b} &  \ldots & c_{a,1} & \ldots & c_{a,b} \\
	c_{1,b+1}  & \ldots & c_{1,2b} &  \ldots & c_{a,b+1} & \ldots & c_{a,2b} \\
	& \vdots & & \vdots &  & \vdots & \\
	c_{1,n-b+1} &  \ldots & c_{1,n} & \ldots & c_{a,n-b+1} & \ldots & c_{a,n} \\ \hline
        & \vdots & &\ddots & &\vdots & \\ \hline
	c_{N-a+1,1}  & \ldots & c_{N-a+1,b} &  \ldots & c_{N,1}  & \ldots & c_{N,b} \\
	c_{N-a+1,b+1}  & \ldots & c_{N-a+1,2b} &  \ldots & c_{N,b+1}  & \ldots & c_{N,2b} \\
	& \vdots & & \vdots & \vdots & \\
	c_{N-a+1,n-b+1}  & \ldots & c_{N-a+1,n} & \ldots & c_{N,n-b+1} & \ldots & c_{N,n} \\
\end{array}
\right) .\end{equation*} Using this pattern, any single burst of maximal size $((s_1 - 1) \frac{n}{b} + 1) \times
((s_2 - 1) b + 1)$ can be corrected by the outer code, where $s_1, s_2 \in \mathbb{N}$ with $s_1 s_2
\le s$, and as long as each submatrix of size~$\frac{n}{b} \times b$ which does not collide with a burst
contains at most~$t$ errors, all errors can be corrected. Therefore, this scheme works well if there
is at most one larger burst, or few smaller ones, and a lot of random errors spread more or less uniformly in the matrix. %when they are
%distributed uniformly in an intuitive sense.

Note that Constructions~II and III can be seen as special cases of Construction~V, where the inner
code is the trivial code with $b = \sqrt{m}$ for Construction~II, and with the encoding $\F_{q^m} \cong
\F_q[P] \subseteq \F_q^{m \times m}$ and $b = m$ for Construction~III.

\subsection*{Construction~VI}
Consider another interleaving pattern given by the $n\times N$ array (say $N \geq n$): 
\begin{equation*}\label{concatinterleave2}\left(
\begin{array}{cccccc}
	c_{1,1} & c_{2,1} & c_{3,1} & c_{4,1} & \ldots & c_{N,1} \\
	c_{N,2} & c_{1,2} & c_{2,2} & c_{3,2} & \ldots & c_{N-1,2} \\
	& & \vdots & & & \\
	c_{N-n+2,n} & c_{N-n+3,n} & c_{N-n+4,n} & c_{N-n+5,n} & \ldots & c_{N-n+1,n}
\end{array}
\right)\end{equation*}
Each inner code is interleaved diagonally, and it is clear from the construction that any burst of size $1\times n$ or $n\times 1$ corrupts only one symbol from each inner code. Additionally, an error pattern consisting of a diagonal burst would corrupt an entire inner codeword but be treated as a single error to the outer decoder.

Using a fuzzy scheme in which the witness is prone to burst errors as well as random errors, an
interleaving scheme like that in Construction~IV may be advantageous. In case less burst errors
happen, but more random errors, a scheme like the ones in Construction~V or VI might be better
suited.

\section{Conclusions}

One of the big advantages of using burst error correcting codes relies in the decoding complexity, since we are correcting errors in words of a certain length but using decoding procedures for codes of smaller lengths. For example
in Constructions~I, II and III the decoding complexity is dominated by the Reed-Solomon decoder in the extension field. Using the standard Gorenstein-Peterson-Zierler decoding procedure
 this is given by $$O\Bigl(n \cdot \frac{n-k}{2}\Bigr)= O\Bigl((q^m-1)\frac{q^m-1-k}{2}\Bigr) = O(q^{2m})$$
operations in $\F_{q^m}$. More recent algorithms for decoding cyclic codes \cite{ISIT} can even achieve a complexity of 
$$O\Bigl(\sqrt{n}\log n\cdot\frac{n-k}{2}\Bigr) = O(mq^{\frac{3}{2}m}).$$
Note, that in Construction~III  the conversion of the elements of $\F_q[P]$ to elements of $\F_q[\alpha]$ is more complex then the conversion of elements of $\F_{q}^{m}$ to elements of $\F_{q^{m}}$ needed in Constructions~I and II, but the complexity is still polynomial in $m$ and thus does not change the overall complexity in the Big-O notation. %This conversion is dominated by $O(m^3)$ operations in $\F_q$. Since $m^3\leq 2^{2m-1}$ for any $m\geq 2$, the overall decoding complexity is

The decoding complexity for Constructions~IV, V and VI depends on the choices of the inner and outer codes. If for example you choose a RS-code over $\F_{q^{k}}$ for the outer code and a $q$-ary narrow-sense primitive BCH-code of length $n=q^{m}-1$ for the inner code, then the decoding complexity is $O(N n t)=O((q^{k}-1) (q^{m}-1) t)=O( q^{k+m}t)$
operations in $\F_{q^{m}}$, 
where $t$ is the error-correction capability of the inner code, and $O(q^{2k})$ operations in $\F_{q^{k}}$. 

%The decoding complexity differs from the one of Construction~I or II by the conversion of the elements of $\F_q[P]$ to elements of $\F_q[\alpha]$. This conversion is dominated by $O(m^3)$ operations over $\F_q$. Since $m^3\leq 2^{2m-1}$ for any $m\geq 2$, the overall decoding complexity is
%$$ O(2^{2m-1} + m^3) = O(2^{2m-1})$$
%operations in $\F_q$, i.e.\ in big-O notation the same complexity as for Construction~I and II.

Depending on the application, any of the six presented constructions can be advantageous for burst error correction in the syndrome fuzzy hashing scenario. The differences appear in the type and number of correctable errors as well as in the decoding complexity for given code word size in the $q$-ary expansion. Naturally, this list of constructions is not complete and there exist other codes that can be useful for burst-error correction in the storage of noisy data.

We want to conclude with a list of recommendations for which situations the above constructions can
be used:
\begin{itemize}
  \item Constructions~I and II are relatively fast to decode (as the inner code is trivial), while
    having a good burst error correction capability. They can also correct a few random errors, but
    not too many.
  \item %Construction~III can correct bursts similarly well as Construction~II, but, given the smaller length of the outer code, 
  %decoding is faster, which can be important on
Construction~III can correct bursts similarly well as Construction~II, and compensates a relatively lower rate with a faster decoding procedure, which can be relevant when
    deploying the syndrome fuzzy hashing scheme on embedded devices.
  \item Construction~IV works well with several and larger bursts, but not too many random errors.
  \item Construction~V works well with a few large bursts and many uniformly enough distributed
    random errors.
  \item Construction~VI is well suited for rectangular bursts of size $k \times \ell$ with $k$ much
    smaller than $\ell$ (or vice versa), and for random errors.
\end{itemize}

\end{document}